\theoremstyle{plain}
\newtheorem{corollary}{Corollary}[section]
\newtheorem{definition}{Definition}[section]
\newtheorem{proposition}{Proposition}[section]
\newtheorem{remark}{Remark}[section]
\numberwithin{equation}{section}
\title{A Generalized Euler probability distribution}
\begin{document}

\maketitle
\begin{center}
\author{Zouha\"ir Mouayn$^{\ast}$ \and Othmane El moize $^{\flat }$}
\end{center}

\begin{abstract}
\scriptsize{From a new class of $q$-deformed coherent states we introduce a generalization of the Euler probability distribution for which the main statistical parameters are obtained explicitly. As application, we discuss the corresponding photon counting statistics with respect to  all parameters labeling the coherent states under consideration.}
\end{abstract}

\begin{center}
\textit{{\footnotesize ${}^{\ast }$ Department of Mathematics, Faculty of Sciences
and Technics (M'Ghila),\\ P.O. Box. 523, B\'{e}ni Mellal, Morocco\vspace*{0.2mm}\\[3pt]
${}^{\flat }$  Department of Mathematics, Faculty of Sciences,\vspace*{-0.1em}\\Ibn Tofa\"{i}l University, P.O. Box. 133, K\'enitra, Morocco}
}
\end{center}
%\date{\today}

%%%%%%
%%%%%%
%%%%%%%%%%%%%%%%%%%%%%
%%%%%%%%%%%%%%%%%%%%%%
%%%%%%
%%%%%%

%%%%%%
%%%%%%
%%%%%%%%%%%%%%%%%%%%%%
%%%%%%%%%%%%%%%%%%%%%%
%%%%%%
%%%%%%

\section{Introduction}
The \textit{canonical} coherent states (CS), denoted $|z\rangle$ and labeled by points $z\in \mathbb{C}$, which go back to the early years of quantum mechanics \cite{schro} may be defined in four ways : (i) as eigenstates of the annihilation operator $A$, (ii) by applying the displacement operator $exp(zA^{*}-\bar{z}A)$ on the vacuum state $|0\rangle$ such that $A|0\rangle =0$, where $A^{*}$ is the Hermitian conjugate of $A$, (iii) by finding states that minimize the Heisenberg uncertainty principle and (iv) as a specific superposition of  eigenstates $\phi_j$ of the harmonic oscillator number operator $N=A^{*}A$ as
\begin{equation}\label{CCS}
\Psi_z=\left( e^{z\bar{z}}\right)^{-1/2} \displaystyle\sum_{j=0}^{+\infty} \frac{\bar{z}^j}{\sqrt{j!}}\phi_j.
\end{equation}

Recently, much attention has been paid to a class of states named "\textit{non classical}" which appeared in quantum optics and in other fields ranging from solid states physics to cosmology. These states exhibit some purely quantum-mechanical properties, such as squeezing and antibunching (sub-Poissonian statistics) \cite{dodo02}. Among them, there are the so-called \textit{generalized} coherent states (GCS) arising by extending one of the four aforementioned  ways defining the canonical CS. These extensions may lead to  inequivalent definitions \cite{pere86}. 

The GCS are usually associated with algebras other than the oscillator one \cite{KLSK85}. An important example is provided by the $q$-deformed CS, related to deformations of the canonical commutation relation or, equivalently, to deformed boson operators \cite{Bi89, ACO}, where this type of CS have been constructed in a way that they reduce to their  standard counterparts as $q\to 1$. Among the latter ones, those satisfying the relation 
\begin{equation*}\label{qcommut}
A_qA_q^{*}-q A_q^{*}A_q=1
\end{equation*}    
with $0<q<1$, see \cite{ACO}. The operators $A_q$ are often termed maths-type $q$-bosons \cite{Solo94} because the \textit{'basic'} numbers and special functions attached to them have been extensively studied in the mathematical literature for a along time \cite{GR}.  A $q$-analogue of the number states expansion \eqref{CCS} is 
\begin{equation}\label{qdefoCS1}
\Psi_z^q:=\left(e_q(z\bar{z})\right)^{-\frac{1}{2}%
}\sum_{j=0}^{+\infty}\frac{\bar{z}^{j}}{\sqrt{[j]_q!}}\phi_j^q,
\end{equation}  
$(1-q)z\bar{z}<1$ and $e_q(z\bar{z})$ being a $q$-exponential function (see Eq.\eqref{qexpo1} below). Here, the $q$-number states $\phi_j^q$ are supposed to span a formal quantum Hilbert space $\mathscr{H}_q$.  By averaging the density matrix  $\hat{\rho_j}:=|\phi_j^q\rangle\,\langle \phi_j^q|$ in the system of CS $\Psi_z^q$, we obtain the Husimi $Q$-function \cite{Husimi} $Q_{\hat{\rho_j}}(z):= \mathbb{E}_{\Psi_z^q}\left(\hat{\rho_j}\right)=\langle \hat{\rho_j}\Psi_z^q,\,\Psi_z^q\rangle_{\mathscr{H}_q}.$ Fixing $z$ and varying $j$, the function
\begin{equation}\label{euler def}
j\mapsto Q_{\hat{\rho_j}}(z)=\frac{|z|^{2j}}{[j]_q!} \left(e_q(|z|^2)\right)^{-1},\quad j=0,1,2,\cdots ,
\end{equation}
defines a $q$-deformed Poisson distribution, here denoted by $X\sim\mathscr{P}(\lambda;q)$, with $\lambda =z\bar{z}$ as a parameter. The latter one was first introduced in \cite{Benkha88} under the name \textit{Euler} distribution. It is unimodal, have increasing failure rates, overdispersed and is infinitely divisible  \cite{Kem92}.  Note that it reduces to the standard Poisson distribution $\mathscr{P}(\lambda)$ in the limit $q\to 1$.\medskip\\
In the present paper, we consider the following generalization of \eqref{qdefoCS1}:
\begin{equation*}
\Psi_{z,m}^q:=\left(\mathcal{N}_{q,m}(z\bar{z}) \right) ^{-\frac{1}{2}}%
\displaystyle \sum_{j=0}^{+\infty }\overline{C_{j}^{q,m}(z)}\phi_j^q,
\end{equation*}
where 
\begin{equation*}
C_j^{q,m}(z):=\frac{(-1)^{min(m,j)} (q;q)_{max(m,j)}q^{\binom{min(m,j)}{2}}\sqrt{1-q}^{|m-j|}|z|^{|m-j|}e^{i(m-j)arg(z)}}{(q;q)_{|m-j|}\sqrt{q^{mj}(q;q)_m(q;q)_j}}P_{min(m,j)}\left( (1-q)z\bar{z};q^{|m-j|}|q\right),
\end{equation*}
are given in terms of Wall polynomials $P_n(\cdot,a|q)$ (\cite{KS}, p.260). For fixed $m\in\mathbb{N}$, these coefficients generalize the above ones $\frac{1}{\sqrt{[j]_q!}}z^j$ in the number states expansion \eqref{qdefoCS1}. $\mathcal{N}_{q,m}(z\bar{z})$ is a factor ensuring the normalization condition   $\langle \Psi_{z,m}^q,\Psi_{z,m}^q\rangle_{\mathscr{H}_q}=1$.

Here, also by the same procedure leading to \eqref{euler def}, we fix $z$, and we associate to the generalized CS $\Psi_{z,m}^q$ a discrete probability distribution, denoted by $X\sim \mathscr{P}(\lambda;q,m)$, as  
\begin{equation}
j\mapsto \mathrm{Pr}(X=j)=\langle \hat{\rho_j}\Psi_z^{q,m},\,\Psi_z^{q,m}\rangle_{\mathscr{H}_q},\quad j=0,1,2,\ldots,\quad \lambda=z\bar{z},
\end{equation}
which naturally, generalizes the Euler probability distribution in \eqref{euler def}. Precisely, our goal is to  introduce explicitly the discrete random variable $\mathscr{P}(\lambda;q,m)$. Next, we  write down its generating function (p.g.f) from which we drive the main statistical parameters of the $q$-deformed  number operator $N_q=A^*_qA_q$. For the latter one, we examine  the photon counting statistics  in the state $\Psi_{z,m}^q$ with respect to the location of the labeling point  $z$ in the corresponding phase space. 
% That is,
%\begin{equation}
%\mathrm{Pr}\left(X_{q,m}=j\right) =\big|\langle\vartheta_{z,m,q},\varphi _{j}^q\rangle_{\mathscr{H}}\big|^2\quad j=0,1,\ldots,\infty.
%\label{Eq:pjmbq}
%\end{equation}  
%here denoted by  $X\thicksim\mathcal{P}(\lambda,q,m),\,0<\lambda<\frac{q^m}{1-q},\,0<q<1,\, m\in\mathbb{N}$. 

The paper is organized as follows. In Section 2, we recall some notations of $q$-calculus. In Section 3, we define a new class of generalized coherent states after a brief review of  the coherent state formalism and its $q$-analog.  In Section 4, we introduce  a generalized Euler probability distribution and we give a formula for its generating function (p.g.f). In Section 5, the main statistical parameters of the $q$-deformed number operator are derived. Section 6 is devoted to discuss  the domain of classicality/non-classicality of the  generalized $q$-deformed CS we have introduced. In section 7, we conclude with some remarks.
%%%%%%%%%%%%%%%%%%%%%%%%%%%%%%%
%%%%%%%%%%%%%%%%%%%%%%%%%%%%%%%
%%%%%%%%%%%%%%%%%%%%%%%%%%%%%%%
%%%%%%%%%%%%%%%%%%%%%%%%%%%%%%%
%%%%%%%%%%%%%%%%%%%%%%%%%%%%%%%%%%%%%%%%%%%%%%%%%%%%%%%%%%%%%%
%%%%%%%%%%%%%%%%%%%%%%%%%%%%%%%
%%%%%%%%%%%%%%%%%%%%%%%%%%%%%%%
%%%%%%%%%%%%%%%%%%%%%%%%%%%%%%%
\section{Notations}
  This section collects the basic notations of $q$-calculus and definitions used in the rest of the paper.  For more details we refer to \cite{GR, KS, TE}. We assume that $0<q<1$.\smallskip\\
\textbf{1.} For $a\in\mathbb{C}$, the number 
\begin{equation}\label{q-number}
[a]_q=\frac{1-q^a}{1-q}
\end{equation} 
 is called a $q$-number which satisfies $[a]_q\to a$, $q\to 1$. In particular, $[n]_q$ is called a $q$-integer.\smallskip\\
\textbf{2.} For $a\in\mathbb{C}$, the $q$-shifted factorial is defined by 
  \begin{equation}
(a;q)_{0}=1,\quad (a;q)_{n}:=\prod\limits_{k=0}^{n-1}\left( 1-aq^{k}\right),\:\: n\in \mathbb{N}, \quad (a;q)_{\infty}:=\prod\limits_{k=0}^{\infty}\left( 1-aq^{k}\right),
\end{equation}
and, for any  $\alpha\in\mathbb{C}$, we shall also use
\begin{equation}\label{id11} 
(a;q)_{\alpha}=\frac{(a;q)_{\infty}}{(aq^{\alpha};q)_{\infty}}, \; aq^{\alpha}\neq q^{-n},\, n\in\mathbb{N}.
\end{equation}
\smallskip\\
\textbf{3.} For $a\in\mathbb{C}$, the $q$-shifted factorials satisfy the following useful identities
  \begin{equation}\label{id1}
  (aq^k;q)_{n-k}=\frac{(a;q)_n}{(a;q)_k},\; k=0,1,2,...,n,
  \end{equation}
  \begin{equation}\label{id2}
  (q^{-n};q)_k=\frac{(q;q)_n}{(q;q)_{n-k}}(-1)^k
  q^{\binom{k}{2}-nk},\; k=0,1,2,...,n,
  \end{equation}
  \begin{equation} \label{eq13}
  (aq^n;q)_k=\frac{(a;q)_k(aq^k;q)_n}{(a;q)_n},
  \end{equation}
  \begin{equation}\label{id14}
  (a;q)_{n+k}=(a;q)_n(aq^n;q)_k,
  \end{equation}
  \begin{equation}\label{id15}
  (a;q)_n=(a^{-1}q^{1-n};q)_n(-a)^nq^{\binom{n}{2}},
  \end{equation}
  \begin{equation}\label{id16}
  (aq^{-n};q)_k=\frac{(a^{-1}q;q)_n}{(a^{-1}q^{1-k};q)_n}(a;q)_kq^{-nk},\quad a\neq 0.
  \end{equation}
  \textbf{4.} For $a_1,a_2,...,a_l\;\in\mathbb{C}$, the multiple $q$-shifted factorials is defined as follows
\begin{equation}
(a_1,a_2,...,a_l;q)_n=(a_1;q)_n(a_2;q)_n\cdots(a_l;q)_n,\quad, \:l\in\mathbb{N}\: \mathrm{and}\: n\in\mathbb{N}\cup\{\infty\}.
\end{equation}
\smallskip\\
\textbf{5.} The $q$-binomial coefficient is  defined  by%
\begin{equation}\label{binomid}
\begin{bmatrix} n\\ k \end{bmatrix}_q:=\frac{[n]_{q}!}{[n-k]_{q}![k]_{q}!}=\frac{(q;q)_{n}}{%
(q;q)_{n-k}(q;q)_{k}},\text{ } \: k=0,1,\cdots,n,
\end{equation}%
where 
\begin{equation}\label{qfactor}
[n]_q!=\frac{(q;q)_n}{(1-q)^n}
\end{equation}
denotes the $q$-factorial of  $n$. Eq.\eqref{binomid} can be generalized as 
\begin{equation}\label{binomger}
\begin{bmatrix} \alpha\\ k \end{bmatrix}_q:=(-1)^kq^{k\alpha-\binom{k}{2}}\frac{(q^{-{\alpha}};q)_k}{(q;q)_k},\:\alpha\in\mathbb{C}.
\end{equation}
\smallskip\\
\textbf{6.} There are two different natural $q$-extensions for the exponential function :
\begin{equation}\label{qexpo1}
e_q(\xi):= \displaystyle\sum_{n\geq 0}  \frac{\xi^n}{[n]_q!}=\frac{1}{((1-q)\xi;q)_{\infty}},\quad|\xi|<\frac{1}{1-q},
\end{equation}
and 
\begin{equation}\label{qexpo2}
 E_q(\xi):= \displaystyle\sum_{n\geq 0}  \frac{q^{\binom{n}{2}}}{[n]_q!}\xi^n=(-(1-q)\xi;q)_{\infty}
\end{equation}
which  are related by
\begin{equation}\label{relaqexpo}
e_q(\xi)E_q(-\xi)=1.
\end{equation}
\smallskip\\
\textbf{7.} The  hypergeometric series is defined by 
  \begin{equation}\label{hgfclass}
   \setlength\arraycolsep{1pt}
{}_r F_s\left(\begin{matrix}a_1,...,a_r \\ b_1,...,b_s \end{matrix}\left|\xi\right.\right):=\displaystyle\sum_{k=0}^{\infty} \frac{(a_1)_k\cdots (a_r)_k}{(b_1)_k\cdots(b_s)_k}\frac{\xi^k}{k!},
  \end{equation}
  where $(a)_0:=1 \,\mathrm{and}\: (a)_k:=a(a+1)(a+2)\cdots(a+k-1),\, k=1,2,3,\cdots,$ is the shifted factorial. A basic hypergeometric series is defined by 
  \begin{equation}\label{hgdef}
  \setlength\arraycolsep{1pt}
{}_r \phi_s\left(\begin{matrix}a_1,...,a_r \\ b_1,...,b_s \end{matrix}\left|q;\xi\right.\right):=\displaystyle\sum_{k=0}^{\infty} \frac{(a_1,...,a_r;q)_k}{(b_1,...,b_s;q)_k}(-1)^{(1+s-r)k}q^{(1+s-r)\binom{k}{2}}\frac{\xi^k}{(q;q)_k}.
  \end{equation}
  We note that both of the series ${}_r F_s$ and ${}_r \phi_s$ converges absolutely for all $\xi$ if $r\leq s$ and for $|\xi|<1$ if $r=s+1$. In this special case,  Eq.\eqref{hgdef} reduces to 
  \begin{equation}\label{hgdefsimpl}
  \setlength\arraycolsep{1pt}
{}_{s+1} \phi_s\left(\begin{matrix}a_1,...,a_{s+1} \\ b_1,...,b_s \end{matrix}\left|q;\xi\right.\right):=\displaystyle\sum_{k=0}^{\infty} \frac{(a_1,...,a_{s+1};q)_k}{(b_1,...,b_s;q)_k}\frac{\xi^k}{(q;q)_k}.  \smallskip\\ 
  \end{equation}
  \textbf{8.} The  Laguerre polynomial is defined by
  \begin{equation}\label{Laguerredef}
  L_n^{(\alpha)}(x):=\frac{(\alpha+1)_n}{n!}\sum_{j=0}^n\frac{(-n)_j}{(\alpha+1)_j}\frac{x^j}{j!},\:\alpha>-1,
  \end{equation}
  and can be expressed in terms of the confluent hypergeometric function ${}_1 F_1$ (which is defined by $r=s=1$ in \eqref{hgfclass}) as
  \begin{equation}\label{Laguerredef2}
  L_n^{(\alpha)}(x)=\frac{(\alpha+1)_n}{n!}\setlength\arraycolsep{1pt}
{}_1 F_1\left(\begin{matrix}-n \\ \alpha+1 \end{matrix}\left|x\right.\right).
  \end{equation}
 \textbf{9.} The little $q$-Laguerre or Wall polynomial is defined by means of the basic hypergeometric series ${}_2 \phi_1$ or ${}_2 \phi_0$ as 
  \begin{eqnarray}\label{Wall}
   P_n(x;a|q)=\setlength\arraycolsep{1pt}
{}_2 \phi_1\left(\begin{matrix}q^{-n},0 \\ aq \end{matrix}\left|q;qx\right.\right)
= \frac{1}{(a^{-1}q^{-n};q)_n}\setlength\arraycolsep{1pt}
{}_2 \phi_0\left(\begin{matrix}q^{-n},x^{-1} \\ - \end{matrix}\left|q;\frac{x}{a}\right.\right)
  \end{eqnarray}
  and satisfy 
  \begin{equation}
\displaystyle\lim_{q\rightarrow 1} P_n(x(1-q);q^{\alpha}|q)=\frac{n!}{(\alpha+1)_n}L_n^{(\alpha)}(x).
\end{equation}
%%%%%%%%%%%%%%%%%%%%%%%%%%%%%%%
%%%%%%%%%%%%%%%%%%%%%%%%%%%%%%%%%%%%%%%%%%%%%%%%%%%%%%%%%%%%%%
%%%%%%%%%%%%%%%%%%%%%%%%%%%%%%%
%%%%%%%%%%%%%%%%%%%%%%%%%%%%%%%
%%%%%%%%%%%%%%%%%%%%%%%%%%%%%%%
%%%%%%%%%%%%%%%%%%%%%%%%%%%%%%%
%%%%%%%%%%%%%%%%%%%%%%%%%%%%%%%%%%%%%%%%%%%%%%%%%%%%%%%%%%%%%%
\section{A class of generalized $q$-deformed coherent states}
The original idea of coherent states was introduced by E. Schr\"{o}dinger in 1926 \cite{schro} in order to obtain quantum states in $L^2(\mathbb{R})$ that follow the classical flow associated to the harmonic oscillator Hamiltonian 
\begin{equation*}
\hat{H}=-\frac{\hslash^2}{2}\frac{d^2}{dx^2}+\frac{1}{2}x^2-\frac{1}{2}.
\end{equation*}
Namely, we have a set $\{\Psi_z\in L^2(\mathbb{R}),\;z\in\mathbb{C}\}$ labeled by elements of $\mathbb{C}\simeq T^*\mathbb{R}$ (the phase space of a particle moving on $\mathbb{R}$) given by 
\begin{equation}\label{QSdef}
\Psi_z(x)=e^{-\frac{1}{2}\hslash z\bar{z}}\frac{1}{(\pi \hslash)^4 }exp\left(-\frac{1}{2\hslash}(\bar{z}^2+x^2-2\sqrt{2}\bar{z}x)\right)
\end{equation}
where $\hslash$ denotes the Planck parameter. If we denote 
$\{\phi_j\}$ an orthonormal  basis of $L^2(\mathbb{R})$ 
consisting of eigenfunctions of $\hat{H}$ , i.e. $\hat{H}\phi_j=j\phi_j$ (called number states) and we set  $\hslash=1$ for the sake of 
simplicity. Then, the function in \eqref{QSdef} also 
admit an expansion over the basis vectors $\{\phi_j\}$ as
\begin{equation}\label{CSdef}
\Psi_z(x)=\left(e^{z\bar{z}}\right)^{-1/2}\sum_{j=0}^{+\infty} \frac{\bar{z}^j}{\sqrt{j!}}\phi_j(x).
\end{equation}

Similarly to the standard harmonic oscillator, the $q$-analog of the number states are supposed to span a formal quantum Hilbert space $\mathscr{H}_q$ and are given by 
\begin{equation}
\phi_j^q=\frac{(A_q^{*})^n}{\sqrt{[n]_q!}}|0\rangle_q,
\end{equation}
$A_q^{*}$ and $A_q$ are  $q$-creation and $q$-annihilation operators satisfying the commutation relation $A_qA_q^{*}-qA_q^{*}A_q=1$. Here, the oscillator-like Hamiltonian may be defined by $\hat{H}_q:=\frac{1}{2}(A_qA_q^{*}+A_q^{*}A_q)$. Therefore, a generalization of  coherent states \eqref{CSdef} is provided by  states with the form 
\begin{equation}\label{qcsdef2}
\Psi_z^q:=\left(e_q(z\bar{z})\right)^{-\frac{1}{2}%
}\sum_{j= 0}^{+\infty}\frac{\bar{z}^{j}}{\sqrt{[j]_q!}}\phi^q _{j}
\end{equation} 
where $z\bar{z}<(1-q)^{-1}$ which determines the domain of convergence of $e_q(z\bar{z})$. We also observe that  coefficients 
\begin{equation}\label{qcoeffi}
C_{j}^q(z):=\frac{z^{j}}{\sqrt{[j]_q!}},\quad j=0,1,2,\ldots , 
\end{equation}    
appearing in \eqref{qcsdef2} form an orthonormal  system  in the Hilbert space $L^2\left(\mathbb{C},\,d\mu_q(z)\right)$ where
\begin{equation}
\label{dmuq}
d\mu_q(z)=\frac{d\theta}{2\pi}\otimes\sum_{l\geq 0}\frac{q^l(q;q)_{\infty}}{(q;q)_l}d\mu_l(z),
\end{equation}
and $d\mu_l(z)$  is the Lebesgue measure on the circle of radius  $r_l=q^{\tfrac{1}{2}l}/\sqrt{1-q}$. Actually, these coefficients $C_{j}^q(z)$ constitute a particular case of a larger class of $2D$ complex $q$-orthogonal polynomials  introduced in \cite{IZ} as
\begin{equation}\label{2Dqopo}
H_{m,j}(z,\zeta|q):=\displaystyle\sum_{k=0}^{m\wedge j} \begin{bmatrix} m\\k \end{bmatrix}_{q} \begin{bmatrix} j\\k \end{bmatrix}_{q} (-1)^k q^{\binom{k}{2}}(q;q)_k z^{m-k}\zeta^{j-k}, \; z,\zeta\in\mathbb{C}, 
\end{equation}
where $j,m\in\mathbb{N}$ and $m\wedge j=\min(m,j).$  Indeed, by taking $\zeta=\bar{z}$ in \eqref{2Dqopo} and  making a slight modification, we will consider, as in \cite{SOZ18}, the functions 
\begin{equation}
\label{Hmj}
C_j^{q,m}(z):=\frac{(-1)^{m\wedge j} (q;q)_{m \vee j}q^{\binom{m\wedge j}{2}}\sqrt{1-q}^{|m-j|}|z|^{|m-j|}e^{i(m-j)arg(z)}}{(q;q)_{|m-j|}\sqrt{q^{mj}(q;q)_m(q;q)_j}}P_{m\wedge j}\left( (1-q)z\bar{z};q^{|m-j|}|q\right),
\end{equation} 
$P_n(\cdot,a|q)$ denote a Wall  polynomial (\cite{KS}, p.109) and $m\vee j=\max(m,j)$. These coefficients generalize the above ones in \eqref{qcoeffi} in the sense that for $m=0$ we have that $C_j^{q,0}(z)=C_j^{q}(z)$. Therefore, by fixing $m\in\mathbb{N}$ in the expression given by \eqref{Hmj},  we can extend the expression \eqref{qcsdef2} of the $q$-deformed CS  by setting  
\begin{equation}
\Psi_{z,m}^q:=\left(\mathcal{N}_{q,m}(z\bar{z}) \right) ^{-\frac{1}{2}}%
\displaystyle \sum_{j=0}^{+\infty }\overline{C_{j}^{q,m}(z)}\phi_j^q, \label{Eq:phizqm}
\end{equation}
where 
\begin{equation}
\mathcal{N}_{q,m}(z\bar{z})=\frac{(q^{1-m}(1-q)z\bar{z};q)_{m}}{ q^m(q^{-m} (1-q)z\bar{z};q)_\infty},\quad z\bar{z}<q^m(1-q)^{-1}, 
\end{equation}
is a factor ensuring the normalization condition $\langle \Psi_{z,m}^q,\Psi_{z,m}^q\rangle_{\mathscr{H}_q}=1$. These $q$-deformed CS will, naturally, allows us to generalize the Euler distribution \cite{Benkha88} with respect to the parameter $m=0,1,2,\cdots$ . 
\begin{remark}
In \cite{IZ}, Ismail and Zhang have also introduced a second class of 2D complex $q$-orthogonal polynomials by setting 
\begin{equation}
h_{m,j}(z,\zeta|q)=q^{mj}z^m\zeta^j\displaystyle\sum_{k=0}^{+\infty} \frac{(q^{-m},q^{-j};q)_k}{(q;q)_k}\left(\frac{-q}{z\zeta}\right)^k
\end{equation}
which are connected to the previous polynomials \eqref{2Dqopo} by
\begin{equation}
h_{m,j}(z,\zeta|q^{-1})=q^{-mj} i^{-m-j} H_{m,j}(i\,z,i\zeta|q).
\end{equation}
This last relation may also suggest the construction of another extension of  $q$-deformed CS via the same procedure. In a such case, this would lead to a new generalization (with respect to $m$) of the Heine probability distribution \cite{Benkha88}. 
\end{remark}
%%%%%%%%%%%%%%%%%%%%%%%%%%%%%%%
%%%%%%%%%%%%%%%%%%%%%%%%%%%%%%%
%%%%%%%%%%%%%%%%%%%%%%%%%%%%%%%
%%%%%%%%%%%%%%%%%%%%%%%%%%%%%%%
\section{A generalized Euler distribution}
Averaging the density matrix $\hat{\rho}_j=|\phi_j^q\rangle\langle\phi_j^q|$ in the system of generalized CS $\Psi_{z,m}^q$ enables us to define, in the usual way, a  photon counting probability distribution by setting $p_j(\lambda; q,m):=\langle \hat{\rho_j}\Psi_z^{q,m},\,\Psi_z^{q,m}\rangle_{\mathscr{H}_q},\,\lambda=z\bar{z}$. Explicit calculations lead to the following definition.
\begin{definition}
\textit{\label{def3.3} The discrete random
variable  having the probability distribution 
\begin{equation}
p_j(\lambda; q,m):=\frac{q^{2\binom{m\wedge j}{2}}(1-q)^{|m-j|}\lambda^{|m-j|}}{\mathcal{N}_{m,q}(\lambda)q^{mj}(q;q)_j(q;q)_m}\left(\frac{(q;q)_{m\vee j}}{(q;q)_{|m-j|}}P_{m\wedge j}\left((1-q)\lambda;q^{|m-j|}|q\right)\right)^2,\quad j=0,1,\ldots ~,
\label{Eq:pjmq}
\end{equation}
is denoted by $X\sim \mathscr{P}( \lambda;q,m)$ and will be called a generalized Euler probability distribution of index $m$.}
\end{definition}
Note that for $m=0$, Eq.\eqref{Eq:pjmq} reduces to
\begin{equation}
p_{j}\left( \lambda;q ,0\right) =\frac{\lambda ^{j}}{[j]_q!}E_q(-\lambda),\quad
j=0,1,2,\ldots,
\end{equation}
which is the standard Euler distribution $\mathscr{P}(\lambda;q)$ with parameter $\lambda$ (see \cite{Benkha88}). The $q$-exponential function $E_q(\cdot)$ is defined by \eqref{qexpo2}.
\begin{remark}
In (\cite{jing94}, p. 495) the author has merged into one formula :
\begin{equation}\label{eulersicong}
P(j;\alpha,q)=\frac{1}{[j]_q!}\left(\frac{\alpha}{1-q}\right)^j\left(e_q(\frac{\alpha}{1-q})\right)^{-1}
\end{equation}
the probability mass functions (PMF) of two distributions. That is, for $0<q<1$ and $0<\alpha<1$ with $\alpha=(1-q)\lambda$, Eq.\eqref{eulersicong} reproduces  the Euler distribution \eqref{euler def} while for $q>1$ and $\alpha<0$ it defines the Heine distribution \cite{Benkha88}. 
\end{remark}
 For $m\neq 0$,   one can check that, as $q\to1$, the following limit  
\begin{equation}
p_j(\lambda;q,m) \longrightarrow\frac{\lambda ^{|m-j|}e^{-\lambda }}{j!}%
\left( \frac{(m\wedge j)!\,L_{m\wedge j}^{\left( |m-j|\right) }\left(
\lambda \right) }{\sqrt{m!}}\right) ^{2}\qquad j=0,1,2,\cdots,
\label{Eq:pjmb}
\end{equation}
holds true, $L^{(\alpha)}_k(\cdot)$   being the Laguerre polynomial in \eqref{Laguerredef}.  So that we recover the generalized Poisson distribution $\mathscr{P}(\lambda;m)$ having the quantity in the R.H.S of \eqref{Eq:pjmb} as its mass probability function  \cite{Moutouh}.

 A convenient way to summarize all  properties of the distribution $X\sim\mathscr{P}(\lambda;q,m)$, $m\geq 0$, is with the probability generating  function (p.g.f) which is  defined by the expectation  
\begin{equation}
\mathcal{G}_{X}(t):=\mathbb{E}\left( t^{X}\right),   \label{Eq:defcar}
\end{equation}
where $t$ is a real number. We precisely establish the following result (see Appendix A for the proof).\medskip\\
\textbf{Proposition 4.1.}\label{1} \textit{For $\lambda\in]0,q^m(1-q)^{-1}[$, the p.g.f of $X\sim\mathscr{P}(\lambda;q,m)$ is given by }
\begin{equation}
\mathcal{G}_{X}(t)=\frac{t^m(q^{-m}(1-q)\lambda;q)_\infty}{(q^{-m}t\lambda(1-q);q)_\infty}\,{}_3\phi_2\left(\begin{array}{c}q^{-m},q/t,t\\
q^{1-m}(1-q)\lambda,q\end{array}\Big|q; q\lambda(1-q)\right),\:|t|\leq 1,\\  \label{prop1}
\end{equation}
\textit{in terms of the basic terminating hypergeometric  series ${}_3\phi_2$}.
\begin{corollary}
For $m=0,$ the expression of $\mathcal{G}_{X}(t)$ reduces to 
\begin{equation}
\frac{((1-q)z\bar{z};q)_\infty}{((1-q)z\bar{z}t;q)_\infty}, 
\end{equation}
which is the well known p.g.f of the Euler distribution. For $m\neq 0$, we have the following limit
\begin{equation}\label{PgfClass}
\mathcal{G}_{X}(t)\longrightarrow \;t^m \mathrm{exp}\left(\lambda(t-1)\right)L_m^{(0)}\left(\lambda(2-(t+\frac{1}{t}))\right),\qquad |t|\leq 1;
\end{equation}
as $q\to 1$.
\end{corollary}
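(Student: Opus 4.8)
The plan is to read everything off the closed form \eqref{prop1} for $\mathcal{G}_X(t)$ in Proposition 4.1, specializing $m=0$ in the first case and passing to the limit $q\to 1$ in the second. For $m=0$ I would simply set $m=0$ in \eqref{prop1}: the prefactor $t^m(q^{-m}(1-q)\lambda;q)_\infty/(q^{-m}t\lambda(1-q);q)_\infty$ becomes $((1-q)\lambda;q)_\infty/((1-q)t\lambda;q)_\infty$, while one of the upper parameters of the ${}_3\phi_2$ turns into $q^{-0}=1$, so that $(1;q)_k=0$ for every $k\geq 1$ and the basic hypergeometric series collapses to its $k=0$ term, namely $1$. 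With $\lambda=z\bar z$ this is exactly the advertised p.g.f. of the Euler distribution.

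For $m\neq 0$ and $q\to 1$ I would treat the prefactor and the ${}_3\phi_2$ separately. For the prefactor, $((1-q)\xi;q)_\infty\to e^{-\xi}$ (the reciprocal of $e_q(\xi)\to e^{\xi}$ from \eqref{qexpo1}, using $[n]_q\to n$) together with $q^{-m}\to 1$ gives $(q^{-m}(1-q)\lambda;q)_\infty\to e^{-\lambda}$ and $(q^{-m}t\lambda(1-q);q)_\infty\to e^{-t\lambda}$, so the prefactor tends to $t^m e^{\lambda(t-1)}$. The basic hypergeometric series ${}_3\phi_2$ appearing in \eqref{prop1} is a \emph{terminating} sum of $m+1$ terms for every $q\in(0,1)$, since $(q^{-m};q)_k=0$ for $k\geq m+1$; hence the limit may be taken term by term with no convergence issue. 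Using $\frac{(q^{a};q)_k}{(1-q)^k}\to (a)_k$ (applied to $a=-m$ and $a=1$), $(c;q)_k\to(1-c)^k$ for a constant $c$ (applied to $c=t$ and, after $q/t\to 1/t$, to $c=1/t$), $(q^{1-m}(1-q)\lambda;q)_k\to 1$ and $\frac{(q\lambda(1-q))^k}{(1-q)^k}\to\lambda^k$, one checks that the net power of $(1-q)$ in the $k$-th term vanishes and that the term converges to $\frac{(-m)_k}{(k!)^2}\bigl[\lambda(1-t)(1-1/t)\bigr]^k=\frac{(-m)_k}{(k!)^2}\bigl[\lambda(2-(t+1/t))\bigr]^k$.

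It remains to identify $\sum_{k=0}^m\frac{(-m)_k}{(k!)^2}x^k$ with $\,{}_1F_1\!\left(\begin{matrix}-m\\1\end{matrix}\big|x\right)$, which by \eqref{Laguerredef2} equals $L_m^{(0)}(x)$ because $(1)_m/m!=1$; taking $x=\lambda(2-(t+1/t))$ and multiplying by the prefactor limit $t^m e^{\lambda(t-1)}$ yields \eqref{PgfClass}. The only genuinely delicate step is the bookkeeping of the $(1-q)$-powers in the term-by-term limit of the ${}_3\phi_2$; everything else is a direct substitution together with the standard special-function identities already recalled in Section 2.
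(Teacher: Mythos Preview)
Your proposal is correct and follows essentially the same route as the paper's proof: both factor $\mathcal{G}_X(t)$ into the prefactor (handled via the $q$-exponentials, giving $t^m e^{\lambda(t-1)}$) and the terminating ${}_3\phi_2$ (whose term-by-term limit, after the same $(1-q)$-power bookkeeping you describe, yields $\sum_{k=0}^m\frac{(-m)_k}{(k!)^2}x^k$ with $x=\lambda(2-(t+1/t))$, identified as $L_m^{(0)}(x)$). The only differences are cosmetic: you also spell out the $m=0$ case (which the paper's proof leaves implicit), and you invoke \eqref{Laguerredef2} rather than \eqref{Laguerredef} for the final identification.
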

\hspace*{-1.2em}\textit{Proof .} Assuming that $|t|\leq 1$ and taking into account the relation \eqref{relaqexpo}, we get that 
\begin{equation}
\lim_{q\to1}\frac{(q^{-m}(1-q)\lambda;q)_\infty}{(q^{-m}t\lambda(1-q);q)_\infty}=\lim_{q\to1}e_q(q^{-m}t\lambda)E_q(-q^{-m}\lambda)=\mathrm{exp}\left(\lambda(t-1)\right).
\end{equation}
By another side, using Eq.\eqref{hgdefsimpl} together with the fact that $(q^{-m};q)_k=0, \forall k>m$, the series ${}_3\phi_2$ in  \eqref{prop1} terminates as 
\begin{eqnarray}\label{terseri}
\sum_{k=0}^{m}\frac{(q^{-m},q/t,t;q)_k}{(q^{1-m}(1-q)\lambda,q;q)_k}\,\frac{\left( q^{-1}\lambda(1-q)\right)^k}{(q;q)_k}.
\end{eqnarray}
Thus, from  identities \eqref{binomger}-\eqref{qfactor} we, successively, have 
\begin{eqnarray}\label{limgfunc}
\lim_{q\to 1}\sum_{k=0}^{m}\frac{(q^{-m},q/t,t;q)_k}{(q^{1-m}(1-q)\lambda,q;q)_k}\,\frac{\left( q^{-1}\lambda(1-q)\right)^k}{(q;q)_k}&=&\sum_{k=0}^{m}\lim_{q\to 1}\left(\frac{(q^{-m};q)_k}{(q;q)_k}\frac{(q/t,t;q)_k}{(q^{1-m}(1-q)\lambda;q)_k}\,\frac{\left( (1-q)\right)^k}{(q;q)_k}\,q^{-k}\lambda^k\right)\cr
&=&\sum_{k=0}^{m}\lim_{q\to 1}\left(\begin{bmatrix} m\\ k \end{bmatrix}_q (-1)^kq^{\binom{k}{2}-mk}\frac{(q/t,t;q)_k}{(q^{1-m}(1-q)\lambda;q)_k}\,\frac{q^{-k}\lambda^k}{[k]_q!}\right)\cr
&=&\sum_{k=0}^{m}\begin{pmatrix} m\\ k \end{pmatrix}(-1)^k\frac{\left(\lambda(1-t)(1-t^{-1})\right)^k}{k!}.
\end{eqnarray}
Finally, from  \eqref{Laguerredef} we can see that the last  sum in \eqref{limgfunc} is the evaluation of the Laguerre polynomial $L_m^{(0)}$ at  $\lambda\left(2-(t+t^{-1})\right)$. This ends the proof. $\square$\begin{remark} By setting $t=e^{iu}$ in the R.H.S of \eqref{PgfClass}, we recover the characteristic function $\Phi_X^m(u)$ of the generalized Poisson distribution $\mathscr{P}(\lambda; m)$, which was obtained in (\cite{Moutouh}, Prop 4.1, p.264).
\end{remark}
%%%%%%%%%%%%%%%%%%%%%%%%%%%
%%%%%%%%%%%%%%%%%%%%%%%%%%%
%%%%%%%%%%%%%%%%%%%%%%%%%%%
%%%%%%%%%%%%%%%%%%%%%%%%%%%
\section{Expectation and variance  of the  operator $[N]_q$}
Here, we first start by observing that the expectation and the variance of the $q$-deformed number operator $[N]_q$ in the state $\Psi_{z,m}^q$ coincide with those of the $q$-deformed random variable  whose values are the $q$-numbers $[j]_q,\: j=0,1,2,\cdots,$ as defined  by \eqref{q-number}. Namely, we have 
\begin{equation}\label{Eqdef}
\langle [N]_q \rangle=\sum_{j\geq 0}[j]_qp_j(\lambda;q,m).
\end{equation}
\begin{proposition}
The mean value and the mean square deviation of the number operator $[N]_q=A_q^*A_q$ in the state $\Psi_{z,m}^q$
\label{corollary}  are respectively given by 
\begin{eqnarray}
\langle [N]_q \rangle &=& \lambda+[m]_q,  \label{Eq:Egpd} \\
\langle ([N]_q-\langle [N]_q \rangle)^2  \rangle &=&\lambda^2q^m(1+q-2q^{-m})+\lambda q^m(2[m]_q+q^m).
\label{Eq:Vgpd}
\end{eqnarray}
%In particular, the limits
%\begin{equation}
%\lim_{q\to 1}\mathbb{E}\left( X\right)=\lambda+m,\qquad \lim_{q\to 1}\operatorname{Var}\left( X\right)=\lambda(1+2m)
%\end{equation}
%hold true and we recover $\mathbb{E}(X_m)$ and $\operatorname{Var}(X_m)$ for $X_m\sim\mathcal{P}\left( \lambda,1 ,m\right)$.
\end{proposition}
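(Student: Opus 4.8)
The plan is to read off the two moments from the probability generating function $\mathcal{G}_X(t)$ of Proposition 4.1 rather than summing the series \eqref{Eq:pjmq} directly. Since $[N]_q=A_q^*A_q$ has eigenvalue $[j]_q$ on the number state $\phi_j^q$, formula \eqref{Eqdef} and its square show that $\langle[N]_q\rangle$ and $\langle[N]_q^2\rangle$ are the first two moments of the $q$-variable $[j]_q=(1-q^j)/(1-q)$, $j=0,1,2,\dots$, under $\mathscr{P}(\lambda;q,m)$. Writing $X\sim\mathscr{P}(\lambda;q,m)$, this gives
\[
\langle[N]_q\rangle=\frac{1-\mathbb{E}(q^{X})}{1-q}=\frac{1-\mathcal{G}_X(q)}{1-q},\qquad
\langle[N]_q^2\rangle=\frac{1-2\mathcal{G}_X(q)+\mathcal{G}_X(q^2)}{(1-q)^2},
\]
and the mean square deviation is $\langle[N]_q^2\rangle-\langle[N]_q\rangle^2$. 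As $0<q<1$, the points $t=q$ and $t=q^2$ lie in $]0,1]$ and, for $\lambda\in]0,q^m(1-q)^{-1}[$, the product $(q^{-m}t\lambda(1-q);q)_\infty$ in \eqref{prop1} does not vanish at $t=q,q^2$ (indeed $q^{-m}t\lambda(1-q)<t\le1$ for such $\lambda,t$), so Proposition 4.1 may be evaluated there. The whole computation thus reduces to finding $\mathcal{G}_X(q)$ and $\mathcal{G}_X(q^2)$.

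First I would evaluate $\mathcal{G}_X(q)$. Here the numerator parameter $q/t=1$ of the ${}_3\phi_2$ in \eqref{prop1} makes $(1;q)_k=0$ for all $k\ge1$, so the series collapses to its $k=0$ term, which equals $1$; and the prefactor, by \eqref{id11} in the form $(a;q)_\infty/(aq;q)_\infty=1-a$ with $a=q^{-m}(1-q)\lambda$, reduces to $\mathcal{G}_X(q)=q^m\bigl(1-q^{-m}(1-q)\lambda\bigr)=q^m-(1-q)\lambda$. Dividing $1-\mathcal{G}_X(q)$ by $1-q$ gives $\langle[N]_q\rangle=(1-q^m)/(1-q)+\lambda=[m]_q+\lambda$, which is \eqref{Eq:Egpd}.

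Next, for $\mathcal{G}_X(q^2)$ one has $q/t=q^{-1}$, and since $(q^{-1};q)_k=0$ for $k\ge2$ the ${}_3\phi_2$ truncates after $k=0$ and $k=1$. Using $(1-q^{-1})q=-(1-q)$ and $(1-q^2)=(1-q)(1+q)$, the $k=1$ term simplifies to $-\dfrac{(1-q^{-m})(1-q)(1+q)\lambda}{1-q^{1-m}(1-q)\lambda}$, while the infinite-product prefactor, again by \eqref{id11} in the form $(a;q)_\infty/(aq^2;q)_\infty=(a;q)_2$, becomes $q^{2m}(1-q^{-m}(1-q)\lambda)(1-q^{1-m}(1-q)\lambda)$. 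The factor $1-q^{1-m}(1-q)\lambda$ cancels; setting $\mu=(1-q)\lambda$ and expanding, one gets
\[
\mathcal{G}_X(q^2)=q^{2m}\bigl(1-q^{-m}\mu\bigr)\bigl(1-(1+q-q^{-m})\mu\bigr)=q^{2m}-q^{2m}(1+q)\mu+q^{m}(1+q-q^{-m})\mu^2.
\]
Substituting $\mathcal{G}_X(q)$ and $\mathcal{G}_X(q^2)$ into $\langle[N]_q^2\rangle-\langle[N]_q\rangle^2$, the $(1-q^m)^2$ terms cancel; restoring $\mu=(1-q)\lambda$ and using $2[m]_q+q^m=(2-q^m-q^{m+1})/(1-q)$ and $q^m+q^{m+1}-2=q^m(1+q-2q^{-m})$ turns the remaining expression into precisely \eqref{Eq:Vgpd}.

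I expect the only delicate point to be the evaluation of $\mathcal{G}_X(q^2)$: one has to spot the truncation at $k=1$, simplify the product of $q$-shifted factorials in that single term, and combine it correctly with the infinite-product prefactor through $(a;q)_\infty/(aq^n;q)_\infty=(a;q)_n$; after that, everything is elementary algebra in the single variable $\mu=(1-q)\lambda$. One could instead compute $\sum_{j\ge0}[j]_q\,p_j(\lambda;q,m)$ and $\sum_{j\ge0}[j]_q^2\,p_j(\lambda;q,m)$ straight from \eqref{Eq:pjmq} via contiguity or orthogonality relations for the Wall polynomials, but going through the already-stated p.g.f.\ is much shorter.
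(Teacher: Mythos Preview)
Your proof is correct and follows essentially the same route as the paper's: both compute $\langle[N]_q\rangle=(1-\mathcal{G}_X(q))/(1-q)$ and the variance as $(\mathcal{G}_X(q^2)-\mathcal{G}_X(q)^2)/(1-q)^2$, evaluating the p.g.f.\ of Proposition~4.1 at $t=q,q^2$. Your treatment of $\mathcal{G}_X(q^2)$ is in fact more explicit than the paper's --- you spell out the truncation of the ${}_3\phi_2$ at $k=1$ via $(q^{-1};q)_k=0$ for $k\ge2$, whereas the paper simply records the result of a ``direct evaluation'' --- but the underlying computation and the final simplification are identical.
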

\begin{proof}
From \eqref{Eqdef} and  \eqref{q-number}, one can write
\begin{eqnarray}\label{Eqeq2}
\langle [N]_q \rangle =\frac{1}{1-q}\left(1-\mathcal{G}_{X}(q)\right)
\end{eqnarray}
in terms of the p.g.f \eqref{prop1}. In view of Proposition 4.1, we get that 
\begin{equation}\label{Eqeq1}
\langle [N]_q \rangle =\frac{1}{1-q}\left(1-\frac{q^m(q^{-m}\lambda(1-q);q)_{\infty}}{(q^{1-m}\lambda(1-q);q)_{\infty}}\,{}_3\phi_2\left(\begin{array}{c}q^{-m},1,q\\
q^{1-m}(1-q)\lambda,q\end{array}\Big|q; q\lambda(1-q)\right)\right).
\end{equation}
Now, by using \eqref{terseri} for $t=q$, the series ${}_3\phi_2$  in \eqref{Eqeq1} reduces to 1. On the other hand, by applying the identity \eqref{id11}, Eq.\eqref{Eqeq1} reduces to   
\begin{eqnarray}
\langle [N]_q \rangle&=&\frac{1}{1-q}\left(1-q^m(q^{-m}\lambda(1-q);q)_1\right)=[m]_q+\lambda.
\end{eqnarray}
For the mean square deviation, Eq.\eqref{Eqdef} leads to  
\begin{equation}\label{vardef}
\langle ( [N]_q-\langle [N]_q \rangle)^2\rangle =\sum_{j\geq 0}([j]_q-\langle [N]_q \rangle)^2\:p_j(\lambda;q,m)=\sum_{j\geq 0}[j]_q^2p_j(\lambda;q,m)-\langle [N]_q \rangle^2,
\end{equation} 
and by using the identity \eqref{q-number}, we, successively, obtain 
\begin{eqnarray}\label{vareq1}
\sum_{j\geq 0}[j]_q^2p_j(\lambda;q,m)&=&\frac{1}{(1-q)^2}\left(\sum_{j\geq 0}(1-2q^j+q^{2j})p_j(\lambda;q,m)\right)\cr
&=&\frac{1}{(1-q)^2}\left( \sum_{j\geq 0} p_j(\lambda;q,m)-2q \sum_{j\geq 0}q^j p_j(\lambda;q,m)+\sum_{j\geq 0}q^{2j} p_j(\lambda;q,m)\right)\cr
&=&\frac{1}{(1-q)^2}\left(1-2\mathcal{G}_{X}(q)+\mathcal{G}_{X}(q^2)\right).
\end{eqnarray} 
Next, we replace the quantities occurring in the R.H.S of \eqref{vardef} by their respective expressions in Eq.\eqref{vareq1} and  Eq.\eqref{Eqeq2}. This leads, after simplifications, to
\begin{eqnarray}\label{vareq2}
\langle ([N]_q-\langle [N]_q \rangle)^2  \rangle &=&\frac{1}{(1-q)^2}\left(1-2\mathcal{G}_{X}(q)+\mathcal{G}_{X}(q^2)\right)-\frac{1}{(1-q)^2}\left(1-2\mathcal{G}_{X}(q)+\left(\mathcal{G}_{X}(q)\right)^2\right)\cr
&=&\frac{1}{(1-q)^2}\left(\mathcal{G}_{X}(q^2)-\left(\mathcal{G}_{X}(q)\right)^2\right).
\end{eqnarray} 
Finally, a direct evaluation of $\mathcal{G}_{X}$ at $q^2$ using  \eqref{prop1} gives us
\begin{eqnarray}\label{Gq2eq1}
\mathcal{G}_{X}(q^2)=q^{2m}(1-q^{-m}\lambda(1-q))\left(1-q^{1-m}\lambda(1-q)+q^{-m}\lambda[m]_q(1+q)(1-q)^2\right).
\end{eqnarray} 
Summarizing the above calculations, we arrive at the   expression of $\langle ([N]_q-\langle [N]_q \rangle)^2  \rangle$ as announced in \eqref{Eq:Vgpd}. 
\end{proof}
%\begin{rmk}
%In the limit $n\to\infty$, the  mean value and the mean square deviation of the number operator $[N]_q$ for %the $q$-binomial state $|(1-q)\lambda,n\rangle$ \cite{jing94} reduce respectively to $\lambda$ and $\lambda%\left(1+\lambda(q-1)\right)$ which coincide with the mean value and the mean square deviation of the number %operator $[N]_q$ in the state $\Psi_{z,0}^q$.
%\end{rmk}
%%%%%
%%%%%%%%%%%%%%%%%%%%%%%%%%%%
%%%%%%%%%%%%%%%%%%%%%%%%%%%%
%%%%%
%%%%%
%Now, we can make use of the statistical parameters given in Corollary\,\ref
%{corollary} to discuss the nonclassicality/classicality of the GCS $\vartheta_{z,m,q}$
%defined in \eqref{Eq:phizqm}. This will be achieved by means of the Mandel
%parameter~defined by ~\cite{mand:79a}: 
%\begin{equation}
%Q\left( X\right) =\frac{\operatorname{Var}\left( X\right) }{\operatorname{E}%
%\left( X\right)}-1,  \label{Eq:mandP}
%\end{equation}
%which measures the deviation of the photon number statistics from the
%Poisson distribution, a characteristic of coherent states. Indeed, $Q=0$
%characterizes Poissonian statistics. If $Q<0$ we have \textit{sub-Poissonian}
%statistics otherwise, statistics are \textit{super-Poissonian}.
\section{Photon counting statistics for $[N]_q$}
To define a measure of non classicality of a quantum state, one can follow several different approaches. An early attempt was initiated by Mandel  \cite{mand:79a} who investigated radiation fields and introduced the parameter  
\begin{equation}
\mathcal{Q}=\frac{\mathbb{V}ar\left( Y\right) }{\mathbb{E}%
\left( Y\right)}-1,  \label{Eq:mandP}
\end{equation}
to measure the deviation of the photon counting probability distribution $Y$ from the Poisson distribution for which $\mathcal{Q}=0$. If $\mathcal{Q}<0$, then the underlying statistics are said to be sub-Poissonian  and describes
the anti-bunching of the light. Such anti-bunching is an explicit feature of a quantum field and its observation would provide rather direct evidence of existence of \textit{optical} photons.  Super-Poisson statistics
corresponds rather to $\mathcal{Q}>0$ and the bunching phenomenon occurs. It is possible to look on this phenomena as a characteristic quantum feature of \textit{thermal} photons. 

In our context, the Mandel parameter reads 
\begin{equation}
\mathcal{Q}_{m,q}(\lambda):=\frac{\langle ([N]_q-\langle [N]_q \rangle)^2  \rangle-\langle [N]_q \rangle}{\langle [N]_q \rangle}.
\end{equation}
For $m=0$, it reduces to $\mathcal{Q}_{0,q}=(q-1)\lambda\,,\:\lambda=z\bar{z}$ which shows the sub-Poissonian nature of the photon statistics of $[N]_q$ inside the domain $0<z\bar{z}<(1-q)^{-1}$. For $m\neq 0$, we  may use the results of Proposition \,\ref
{corollary} to find out that sign of $\mathcal{Q}_{m,q}$ is the sign of the quantity 
\begin{equation}\label{polylamda}
P(\lambda):=\lambda^2 q^m(1+q-2q^{-m})+\lambda\left(q^m(2[m]_q+q^m)-1\right)-[m]_q
\end{equation}
which is polynomial in the variable $\lambda=z\bar{z}$ and whose discriminant is 
%begin{equation}
%\lambda=\frac{1-q^m(2[m]_q+q^m)\pm\sqrt{\Delta_{m,q}}}{2q^m(1+q-2q^{-m})}
%\end{equation}
\begin{equation}\label{delta}
\Delta:=q^{2m}(2[m]_q+q^m)^2+4[m]_q(q^{m+1}-2)-2q^{2m}+1=\frac{q^m-1}{(1-q)^2}\delta
\end{equation}
where
\begin{equation}\label{delta2}
\delta:=(1 + q)^2 q^{3m}+(q-3)(q+1)q^{2m}+(q-1)(3q+1)q^m+7-q(6+q).
\end{equation}
The dependence of the sign of $\delta$ on the  values of $m$ is discussed in Appendix \textbf{B}.\medskip\\
\textbf{Lemma 6.1.} \textit{(i) If $0<q\leq q_0,\:q_0=\frac{5\sqrt{5}-2}{11}$ we have that $\Delta<0$ for all $m\neq 0.$ (ii) If $q_0<q<1$ then :\\
 \hspace*{2 em}(a) $\Delta<0$ if $m>m_q:=\left\lfloor \frac{Log\; \zeta_q}{Log\;q} \right\rfloor$,\\
 \hspace*{2 em}(b) $\Delta>0$ if $m\leq m_q$. Here, $\zeta_q$ is the real solution of the equation
\begin{equation}
(1+q)^2x^3+(q-3)(1+q)x^2+(q(3q-2)-1)x+7-q(6+q)=0
\end{equation}
and $\lfloor s \rfloor$ denotes the greatest integer not exceeding $s$. In this last case, we denote the two roots of the polynomial $P(\lambda)$ in \eqref{polylamda} by }
\begin{equation}
\lambda_\pm^{m,q}=\left(\frac{1-q^m(2[m]_q+q^m)\pm\sqrt{\Delta}}{2q^m(1+q-2q^{-m})}\right)^{1/2}.\medskip\\
\end{equation}
We summarize the discussion on the classicality/nonclassicality of $\Psi_{z,m}^q$ with respect to the location of the labeling point $z$ in the complex plane.\medskip\\
\textbf{Proposition 6.1.} \textit{The photon number statistics for $[N]_q$ in the state $\Psi_{z,m}^q$ are}\smallskip\\
\hspace*{2em}\textit{$(i)$ sub-Poissonian in the following cases :\\
\hspace*{4em}(a) $0<q\leq q_0$ and $m\neq 0$ for $z\bar{z}(1-q)<q^m$.\\
\hspace*{4em}(b) $q_0<q<1$ and $m>m_q$ for $z\bar{z}(1-q)<q^m$.\\ 
\hspace*{4em}(c) $q_0<q<1$ and $m\leq m_q$ for $z\bar{z}<\lambda_+^{m,q}$ or $\lambda_-^{m,q}<z\bar{z}<\frac{q^m}{1-q}$.\\
\hspace*{2em}$(ii)$ super-Poissonian if $q_0<q<1$ and $m\leq m_q$ for $\lambda_+^{m,q}<z\bar{z}<\lambda_-^{m,q}$.\\
\hspace*{2em}$(iii)$ Poissonian if $q_0<q<1$ and $m\leq m_q$ for $z\bar{z}=\lambda_+^{m,q}$ or $z\bar{z}=\lambda_-^{m,q}$}.\\

As we have already mentioned, for $m=0$ the photon statistics of $[N]_q$ is sub-Poissonian inside the whole domain $z\bar{z}<(1-q)^{-1}$. This fact may be known in the literature since it is associated with the Euler distribution. However, when $m\neq 0$ we can conclude from Proposition 6.1 that for specific ranges of parameters $q$ and $m$ namely $q\in]q_0,1[$ and $m\in]0,m_q[$, the $m$-deformation of the Euler distribution as defined by \eqref{Eq:pjmq} gives rise, inside the previous domain $z\bar{z}<(1-q)^{-1}$, to two subdomains where the photon counting of $[N]_q$ exhibits  Poissonian (coherent) and super-Poissonian (thermal) statistics. Since $m_q$ is a threshold value depending on the deformation parameter $q$, we describe below the behavior of $m_q$ with respect to $q$.
\begin{figure}[hbtp]
\centering
\includegraphics[scale=0.5]{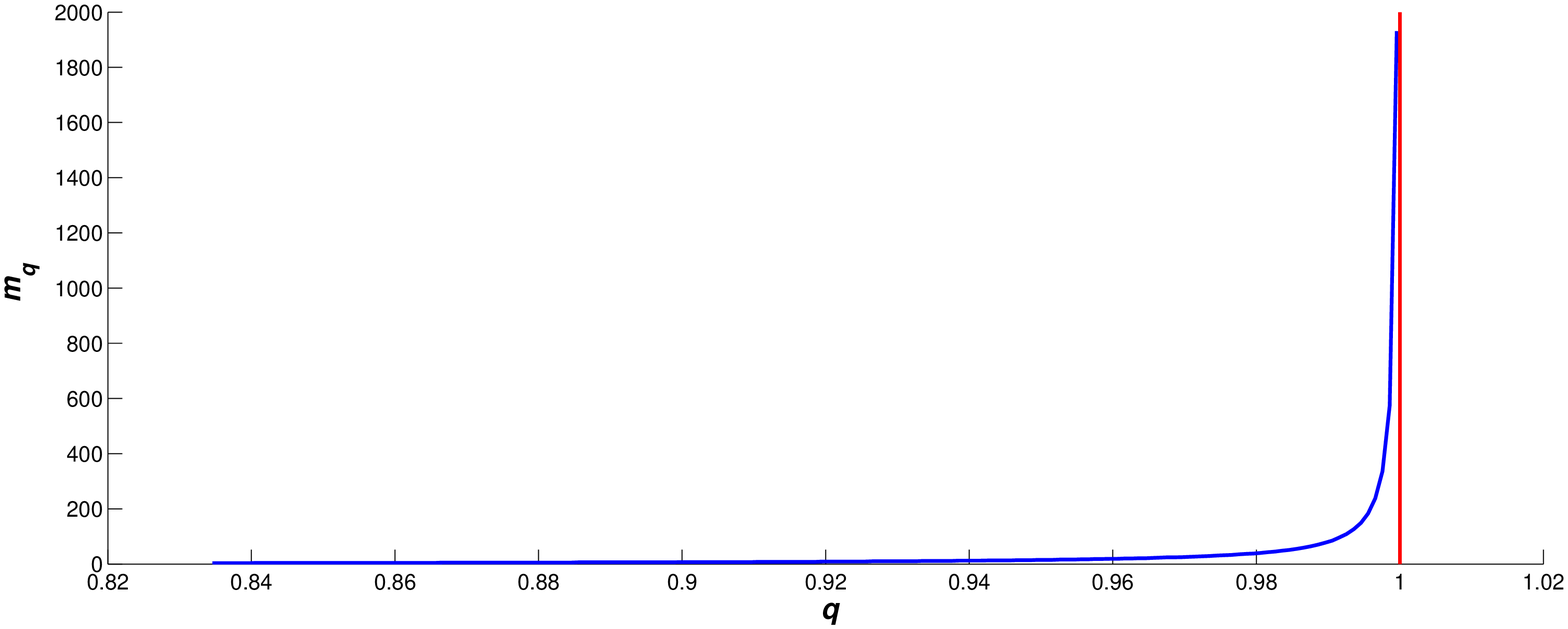}
\end{figure}
%%%%%%%%%%%%%%%%%%%%%%%%%%%%%%%%%%%%%%
%%%%%%%%%%%%%%%%%%%%%%%%%%%%%%%%%%%%%
%%%%%%%%%%%%%%%%%%%%%%%%%
\section{Concluding remarks}
While dealing with a new class of $q$-deformed CS, denoted $\Psi_z^{q,m}$, that we have defined $"$\textit{\`a la  Iwata}$"$ by their number states expansion, $0<q<1$ and $z$ being a lebeling complex number, we have introduced a deformation with respect to the parameter $m=0,1,2,\cdots$. For this distribution, we have obtained the generating function from which the main statistical parameters of the $q$-deformed number operator $[N]_q$ have been derived. As application, we have examined the photon counting statistics of  $[N]_q$ in the state $\Psi_z^{q,m}$  with respect to the ranges of parameters $q,z$ and $m$. For $m=0$, theses statistics are sub-Poissonian (antibunching) inside the domain $z\bar{z}<(1-q)^{-1}$ while for $m\neq 0$ specific ranges of $q\in]q_0,1[$ and $m\in]0,m_q[$ reveal that the $m$-deformation of the the Euler distribution gives rise, inside the previous domain of $z$, to the subdomains : circles and annulus where the statistics under consideration are Poissonian (coherent) and super-Poissonian (bunching) respectively. Finally, this analysis may be better understood if we were able to recover our generalized $q$-deformed CS  $\Psi_z^{q,m}$ as a kind of displaced Fock states as $D_q(z)|\phi_m^q\rangle$ where the operator $D_q(z)$ and the ket vector $|\phi_m^q\rangle$ should be fixed. This would require further investigation which will be deferred to a later paper.\medskip\\
\begin{center}
{\large\textbf{Appendix A}}\medskip\\
\end{center}
According to \eqref{Eq:defcar} the p.g.f of $X\sim\mathscr{P}(\lambda;m,q)$ is given by
 \begin{equation}\label{eq01}
 \mathbb{E}\left( t^{X}\right)=\sum_{j=0}^{+\infty}t^jp_j(\lambda,q,m)=\frac{1}{\mathcal{N}_{m,q}(\lambda)(q;q)_m}\mathfrak{S}^{q,m}(\lambda,t)
 \end{equation}
 where 
 \begin{equation}
 \mathfrak{S}^{q,m}(\lambda,t)=\sum_{j=0}^{+\infty}\frac{t^jq^{2\binom{m\wedge j}{2}}(1-q)^{|m-j|}\lambda^{|m-j|}}{q^{mj}(q;q)_j}\left(\frac{(q;q)_{m\vee j}}{(q;q)_{|m-j|}}P_{m\wedge j}\left((1-q)\lambda;q^{|m-j|}|q\right)\right)^2.
 \end{equation}
 We decompose this last sum as 
 \begin{equation}\label{prop1eq01}
 \mathfrak{S}^{q,m}(\lambda,t)=\mathcal{S}_{(<\infty)}^{q,m}(\lambda,t)+\mathcal{S}_{(\infty)}^{q,m}(\lambda,t)
 \end{equation}
 where
 \begin{eqnarray*}
\mathcal{S}_{(<\infty)}^{q,m}(\lambda,t)&=& \sum_{j=0}^{m-1}\frac{t^jq^{2\binom{j}{2}}(1-q)^{m-j}\lambda^{m-j}}{q^{mj}(q;q)_j}\left(\frac{(q;q)_{m}}{(q;q)_{m-j}}P_{j}\left((1-q)\lambda;q^{m-j}|q\right)\right)^2\\
 &-&  \displaystyle\sum_{j=0}^{m-1} \frac{t^jq^{2\binom{m}{2}}(1-q)^{j-m}\lambda^{j-m}}{q^{mj}(q;q)j}\left(\frac{(q;q)_{j}}{(q;q)_{j-m}}P_{m}\left((1-q)\lambda;q^{j-m}|q\right)\right)^2
 \end{eqnarray*}
 and 
 \begin{equation}\label{finisum}
 \mathcal{S}_{(\infty)}^{q,m}(\lambda,t)=\displaystyle\sum_{j=0}^{\infty} \frac{t^jq^{2\binom{m}{2}}(1-q)^{j-m}\lambda^{j-m}}{q^{mj}(q;q)_j}\left(\frac{(q;q)_{j}}{(q;q)_{j-m}}P_{m}\left((1-q)\lambda;q^{j-m}|q\right)\right)^2. 
 \end{equation}
By making use of  the identity  (\cite[p.3]{MoCa}) :
\begin{equation}
 P_n(x;q^{-N}|q)=x^N(-1)^{-N}q^{\frac{N(N+1-2n)}{2}}\frac{(q^{N+1};q)_{n-N}}{(q^{1-N};q)_n} P_{n-N}(x;q^N|q)
 \end{equation}
 for  parameters $N=j-m,n=j$ and $x=(1-q)\lambda$, we obtain that  $\mathcal{S}_{(<\infty)}^{q,m}(\lambda,t)=0.$ For the infinite sum in \eqref{finisum}, let us rewrite  the Wall polynomial  as (\cite{KS}, p.260) :
 \begin{equation}
 P_n(x;a|q)=\frac{(x^{-1};q)_n}{(aq;q)_n} (-x)^n q^{-\binom{n}{2}}{}_2 \phi_1\left(\begin{matrix}q^{-n},0 \\ x q^{1-n} \end{matrix}\left|q;aq^{n+1}\right.\right)
 \end{equation}
where $n=m,\,x=(1-q)\lambda$ and $a=q^{j-m}$. Next, by using  \eqref{id15} and  \eqref{id14} respectively, Eq.\eqref{prop1eq01} becomes
\begin{equation}\label{eq11}
\mathfrak{S}_m^q(t,\lambda)=\frac{q^{2\binom{m}{2}}(q^{1-m}\xi;q)^2_m}{\xi^m} \eta_q^m(t,\lambda)
\end{equation}
where
\begin{equation}\label{eq11.1}
\eta_q^m(t,\lambda)=\sum_{j\geq 0} \frac{Y^j(q;q)_j}{(q;q)_{j-m}^2(q^{j-m+1};q)_m^2}\:\left({}_2\phi_1\left(\begin{array}{c}q^{-m}, 0\\
\xi\, q^{1-m}\end{array}\Big|q; q^{j+1}\right)\right)^2,
\end{equation}
with $\xi=(1-q)\lambda$ and $Y=q^{-m}t\lambda(1-q)$. By identity \eqref{id14}, it follows that $(q;q)_{j-m}^2(q^{j-m+1};q)_m^2=(q;q)_j^2$, and with \eqref{hgdefsimpl}, we get
\begin{eqnarray}\label{eq12}
\eta_q^m(t,\lambda)&=&\sum_{j\geq 0} \frac{Y^j}{(q;q)_j}\sum_{k = 0}^m \frac{(q^{-m};q)_k}{(\xi q^{1-m};q)_k}\,\frac{(q^{j+1})^k}{(q;q)_k} \sum_{l =0}^m \frac{(q^{-m};q)_l}{(\xi q^{1-m};q)_l}\,\frac{(q^{j+1})^l}{(q;q)_l}\cr
&=&\sum_{k,l = 0}^m \frac{(q^{-m};q)_k}{(\xi q^{1-m};q)_k}\,\frac{q^k}{(q;q)_k}  \frac{(q^{-m};q)_l}{(\xi q^{1-m};q)_l}\,\frac{q^l}{(q;q)_l}\,\sum_{j \geq 0} \frac{(q^{k+l}Y)^j}{(q;q)_j}.
\end{eqnarray}
Now, by applying the $q$-binomial theorem (\cite{KS}, p.17):
\begin{equation}
\sum_{n \geq 0} \frac{a^n}{(q;q)_n}=\frac{1}{(a;q)_{\infty}},\quad |a|<1,
\end{equation}
for $a=q^{k+l}\,Y$, which requires the condition $|t|\leq 1$, the R.H.S of \eqref{eq12} takes the form 
\begin{equation}\label{eq13}
\eta_q^m(t,\lambda)= \sum_{k,l = 0}^m \frac{(q^{-m};q)_k}{(\xi q^{1-m};q)_k}\,\frac{q^k}{(q;q)_k}  \frac{(q^{-m};q)_l}{(\xi q^{1-m};q)_l}\,\frac{q^l}{(q;q)_l}\,\frac{1}{(Yq^{l+k};q)_\infty}.
\end{equation}
By applying the identity \eqref{id11} to the factor $\frac{1}{(Yq^{l+k};q)_\infty}$, Eq.\eqref{eq13} can be rewritten as
\begin{eqnarray}\label{dubsum}
\eta_q^m(t,\lambda)&=&\frac{1}{(Y;q)_\infty} \sum_{k,l=0}^m \frac{(q^{-m};q)_k}{(\xi q^{1-m};q)_k}\,\frac{q^k}{(q;q)_k}  \frac{(q^{-m};q)_l}{(\xi q^{1-m};q)_l}\,\frac{q^l}{(q;q)_l}(Y;q)_{l+k}.
\end{eqnarray}
Next, by writing $(Y;q)_{l+k}=(Y;q)_k(q^kY;q)_l$, it follows that
\begin{eqnarray}
\eta_q^m(t,\lambda)&=&\frac{1}{(Y;q)_\infty}  \sum_{k= 0}^m \frac{(q^{-m},Y;q)_k}{(\xi q^{1-m};q)_k}\,\frac{q^k}{(q;q)_k}\,\sum_{l\geq 0} \frac{(q^{-m},q^kY;q)_l}{(\xi q^{1-m};q)_l}\,\frac{q^l}{(q;q)_l}
\end{eqnarray}
which can also be expressed as
\begin{eqnarray}\label{eq132}
\eta_q^m(t,\lambda)&=& \frac{1}{(Y;q)_\infty}  \sum_{k= 0} ^m\frac{(q^{-m},Y;q)_k}{(\xi q^{1-m};q)_k}\,\frac{q^k}{(q;q)_k}\,{}_2\phi_1\left(\begin{array}{c}q^{-m}, q^kY\\
\xi q^{1-m}\end{array}\Big|q; q\right),
\end{eqnarray}
in terms of the series ${}_2\phi_1$. The latter one satisfies the identity (\cite{GR}, p.10) :
\begin{equation}
{}_2\phi_1\left(\begin{array}{c}q^{-n}, b\\
c\end{array}\Big|q; q\right) =\frac{(b^{-1}c;q)_n}{(c;q)_n}b^n,\quad n=0,1,2,...,
\end{equation} 
which, with the parameters $n=m$, $b=q^k\,Y$ and $c=q^{1-m}\xi$, allows us to rewrite \eqref{eq132} as
\begin{eqnarray}\label{eq14}
\eta_q^m(t,\lambda)&=& \frac{1}{(Y;q)_\infty}  \sum_{k= 0}^m \frac{(q^{-m},Y;q)_k}{(\xi q^{1-m};q)_k}\,\frac{q^k}{(q;q)_k}\,\frac{(q^{1-m-k}\xi/Y;q)_m}{(q^{1-m}\xi;q)_m}\,(q^kY)^m\cr
&=&\frac{Y^m}{(Y;q)_\infty(q^{1-m}\xi;q)_m} \sum_{k= 0}^m \frac{(q^{-m},Y;q)_k}{(\xi q^{1-m};q)_k}\,\frac{(q^{1+m})^k}{(q;q)_k} (q^{1-k}/t;q)_m.
\end{eqnarray}
Furthermore, applying the identity \eqref{id16} to $(q^{1-k}/t;q)_m$, gives that 
\begin{eqnarray}\label{eq15}
\eta_q^m(t,\lambda)&=& \frac{Y^m(q/t;q)_m}{(Y;q)_\infty(q^{1-m}\xi;q)_m}{}_3\phi_2\left(\begin{array}{c}q^{-m},Y,t\\
q^{1-m}\xi,q^{-m}t\end{array}\Big|q; q\right).
\end{eqnarray}
 By making appeal to the finite Heine transformation (\cite{GA09}, p.2): 
 \begin{equation}
 {}_3\phi_2\left(\begin{array}{c}q^{-n},\alpha,\beta\\
\gamma,q^{1-n}/\tau\end{array}\Big|q; q\right)=\frac{(\alpha\,\tau;q)_n}{(\tau;q)_n}\;{}_3\phi_2\left(\begin{array}{c}q^{-n},\gamma/\beta,\alpha\\
\gamma,\alpha\,\tau\end{array}\Big|q; \beta\,\tau q^n\right)
 \end{equation}
 for the parameters $\alpha=t,\;\beta=q^{-m}t\xi,\;\gamma=q^{1-m}\xi,\; \tau=q/t$, Eq.\eqref{eq15} reads
\begin{equation}
\eta_q^m(t,\lambda)=\frac{Y^m(q;q)_m}{(Y;q)_\infty(q^{1-m}\xi;q)_m}{}_3\phi_2\left(\begin{array}{c}q^{-m},q/t,t\\
q^{1-m}(1-q)\lambda,q\end{array}\Big|q; q\lambda(1-q)\right).
\end{equation} 
Taking into account the prefactor in \eqref{eq11}, we arrive, after some simplifications, at the expression 
\begin{equation}\label{eq16}
\mathcal{G}_{X}(t)=\frac{t^m(q^{-m}(1-q)\lambda;q)_\infty}{(q^{-m}t\lambda(1-q);q)_\infty}\,{}_3\phi_2\left(\begin{array}{c}q^{-m},q/t,t\\
q^{1-m}(1-q)\lambda,q\end{array}\Big|q; q\lambda(1-q)\right).
\end{equation}
This completes the proof. $\square$\smallskip\\
%%%%%%%%%%%%%%%%%%%%%%%%%%%%
\begin{center}
{\large\textbf{Appendix B}}\\
\end{center}
In order to determine the sign of $\delta$ in \eqref{delta2} with respect to $m$, we set $\zeta=q^m$ and we look for the solutions  $\zeta$ of the equation
\begin{equation}
(1 + q)^2\zeta^3+(q-3)(q+1)\zeta^2+(q-1)(3q+1)\zeta-q(6+q)+7=0.
\end{equation}
This is equivalent to solve the cubic equation 
\begin{equation}\label{deltamini}
\delta(\zeta)=a\zeta^3+b\zeta^2+c\zeta+d=0
\end{equation}
where  $a=(1+q)^2$, $b=(q-3)(1+q)$, $c=(q-1)(3q+1)$ and $d=7-q(6+q)$. By using the Cardan's method (\cite{Galois}, pp. 4-8), we obtain the following  discriminant
\begin{equation}\label{deltatild}
\tilde{\delta}_q:=\alpha^2+\frac{4\beta^3}{27}
\end{equation}
where $\alpha=\frac{2b^3-9abc+27a^2d}{27a^3}$ and $\beta=\frac{3ac-b^2}{3a^2}$. One can check that $\tilde{\delta}_q\geq 0$ for $0<q<q_0$, $q_0=\frac{5\sqrt{5}-2}{11}$, in this case all real solutions of Eq.\eqref{deltamini} do not belong to the interval $]0,q[$. Therefore, $\delta>0$ for all $ m\neq 0$. From the relation \eqref{delta} and the fact that $\zeta-1<0$, we conclude that $\Delta<0$. On the other hand, for $q_0<q<1$, we have $\tilde{\delta}_q<0$. In this irreducible case, the  roots cannot be extracted directly by Cardan's algebraic formulas. Hence, we arrive at the
so called trigonometrical solution of the cubic equation for the three distinct real roots (\cite{Galois}, pp. 18-19). Here,  only one of them  belonging to the interval $]0,q[$ and it is given by
%\begin{equation}
%\zeta_{q}= \frac{-b}{3a}+(-1+i\sqrt{3})\sqrt[3]{\frac{-\alpha+\sqrt{\tilde{\delta}_q}}{2}}-(1+i\sqrt{3})\sqrt[3]{\frac{-\alpha-\sqrt{\tilde{\delta}_q}}{2}}.
%\end{equation}
\begin{equation}
\zeta=\frac{-b}{3a}+2\sqrt{\frac{-\beta}{3}}\,\mathrm{cos}\left(\theta+\frac{4\pi}{3}\right),\qquad \theta=\mathrm{arcos}\left(\frac{3\sqrt{3}\alpha}{2\beta\sqrt{-\beta}}\right).
\end{equation}
Since $\delta<0$ for $\zeta$ such  that  $\zeta_q<\zeta<q$, or equivalently, $m\leq m_q:=\lfloor\frac{Log\;\zeta_{q}}{Log\; q}\rfloor$,  we deduce that $\Delta>0$. Here, $\lfloor s \rfloor$ denotes the greatest integer not exceeding $s$. $\square$\medskip\\
%%%%%%%%%%%%%%%%%%%%%%%%%%%
%%%%%%%%%%%%%%%%%%%%%%%%%
%%%%%%%%%%%%%%%%%%%%%%%%%%%
\textbf{\large Acknowledgments.} The authors would like to thank the\textit{ Moroccan Association of Harmonic Analysis $\&$ Spectral Geometry}.

%%%%%
%%%%%
%%%%%%%%%%%%%%%%%%%%%%%%%%%%
%%%%%%%%%%%%%%%%%%%%%%%%%%%%
%%%%%
%%%%%

\end{document}